\theoremstyle{plain}
\newtheorem{theorem}{Theorem}
\newtheorem{lemma}[theorem]{Lemma}
\newcommand{\fastdel}{\textsc{FastDelivery}\xspace}
\newcommand{\fastldel}{\textsc{FastLineDelivery}\xspace}
\newcommand{\dist}{\mathit{dist}}
\newcommand{\final}{\mathit{final}}
\newcommand{\APSP}{\mathrm{APSP}}
\newcommand{\ie}{\textit{i.e.}}
\newcommand{\eg}{e.\,g.\xspace}
\newcommand{\uset}[1]{\ifmmode\left\{\,#1\,\right\}\else\{\,#1\,\}\fi}
\tikzstyle{vertice} = [circle, text = black, fill = white, minimum size = 12pt, inner sep = 1pt, line width = .5pt, draw = black, font=\scriptsize]
\tikzstyle{root} = [vertice, text = white, fill = black]
\tikzstyle{edge} = [draw, thick, -]
\tikzstyle{boldEdge} = [edge, line width=3pt]
\tikzstyle{weight} = [font=\small]
\begin{document}
            
\title{An Efficient Algorithm for the Fast Delivery Problem}

\author{Iago A. Carvalho\thanks{Department of Computer Science, Universidade Federal de Minas Gerais, Brazil.
E-mail: iagoac@dcc.ufmg.br, URL: https://iagoac.github.io}
\and Thomas Erlebach\thanks{Department of Informatics, University of Leicester, England, UK.
E-mail: t.erlebach@leicester.ac.uk}
\and Kleitos Papadopoulos\thanks{Department of Informatics, University of Leicester, England, UK.
E-mail: kleitospa@gmail.com}
}

\maketitle

\begin{abstract}
We study a problem where $k$ autonomous mobile agents are initially located on distinct nodes of a weighted graph (with $n$ nodes and $m$ edges). Each autonomous mobile agent has a predefined velocity and is only allowed to move along the edges of the graph. We are interested in delivering a package, initially positioned in a source node $s$, to a destination node $y$. The delivery is achieved by the collective effort of the autonomous mobile agents, which can carry and exchange the package among them. The objective is to compute a delivery schedule that minimizes the delivery time of the package. In this paper, we propose an $\mathcal{O}(kn\log n+km)$ time algorithm for this problem. This improves the previous state-of-the-art $\mathcal{O}(k^2 m + k n^2 + \APSP)$ time algorithm for this problem, where $\APSP$ stands for the running-time of an algorithm for the All-Pairs Shortest Paths problem.
\end{abstract}

\section{Introduction} \label{sec:intro}

Enterprises, such as DHL, UPS, Swiss Post, and Amazon, are now delivering goods and packages to their clients using \emph{autonomous drones}~\cite{Bamburry2015,Regev2018}. Those drones depart from a base (which can be static, such as a warehouse~\cite{Hong2017}, or mobile, such as a truck or a van~\cite{Murray2015}) and deliver the package into their clients' houses or in the street. However, packages are not  delivered to a client that is too far from the drone's base due to the energy limitations of such autonomous aerial vehicles.

In the literature, we find some proposals for delivering packages over a longer distance. One of them, proposed by Hong, Kuby, and Murray~\cite{Hong2017}, is to install recharging bases in several spots, which allows a drone to stop, recharge, and continue its path. However, this strategy may result in a delayed delivery, because drones may stop several times to recharge during a single delivery.

A manner to overcome this limitation is to use a \emph{swarm} of drones. The idea of this technique is to position drones in recharging bases all over the delivery area. Therefore, a package can be delivered from one place to another through the collective effort of such drones, which can exchange packets among them to achieve a faster delivery. One may note that, when not carrying a package, a drone is stationed in its recharging base, waiting for the next package arrival. The problem of computing a package delivery schedule with minimum delivery time for a single package is called the \fastdel problem~\cite{Bartschi2018}.

\sloppy
We can model the input to the \fastdel problem as a graph $G = (V,E)$ with $|V| = n$ and $|E| = m$, with a positive length $l_e$ associated with each edge $e \in E$, and a set of $k$ \emph{autonomous mobile agents} (\eg, autonomous drones) located initially on distinct nodes $p_1,p_2,\ldots,p_k$ of $G$.
Each agent $i$ has a predefined
velocity $v_i > 0$. Mobile agent $i$ can traverse an edge $e$ of the graph in $l_e/v_i$ time.
The package handover between agents can be done on the nodes of the graph or in any point of the graph's edges, as exemplified in Fig.~\ref{fig:package-exchange}. The objective of \fastdel is to deliver a single package, initially located in a source node $s \in V$, to a target node $y \in V$ while minimizing
the delivery time~$\mathcal{T}$.

B\"artschi et al.~\cite{Bartschi2018} also consider the case where each agent $i$ is
additionally associated with a weight $\omega_i>0$ and consumes energy
$\omega_i\cdot l_e$ when traversing edge~$e$. For this model,
the total energy consumption $\mathcal{E}$ of a solution becomes relevant as well,
and one can consider the objective of minimizing $\mathcal{E}$ among all solutions that have the minimum
delivery time $\mathcal{T}$ (or vice versa), or of minimizing a convex combination $\varepsilon \cdot \mathcal{T} + (1 - \varepsilon) \cdot \mathcal{E}$ for a given $\varepsilon \in (0, 1)$.

\fussy

\begin{figure*}[t]
	\centering
	\subfloat[]{\label{subfig:exchange-node}
		\resizebox{0.65\textwidth}{!}{\input{figures/package-exchange-node.tex}}}

	\subfloat[]{\label{subfig:exchange-edge}
	    \resizebox{0.65\textwidth}{!}{\input{figures/package-exchange-edge.tex}}}

	\caption{\protect\subref{subfig:exchange-node} Package exchange on a node; \protect\subref{subfig:exchange-edge} package exchange on an edge.}
	\label{fig:package-exchange}
\end{figure*}

\subsection{Related Work} \label{subsec:related}

The problem of delivering packages through a swarm of autonomous drones has been studied in the literature. The work of Bärtschi et al.~\cite{Bartschi2017} considers the problem of delivering packages while minimizing the total energy consumption of the drones. In their work, all drones have the same velocity but may have different weights, and the package's exchanges between drones are restricted to take place on the graph's nodes. They show that this problem is NP-hard when an arbitrary number of packages need to be delivered, but can be solved in polynomial time for a single package, with complexity $\mathcal{O}(k + n^3)$. 

When minimizing only the delivery time $\mathcal{T}$, one can solve the problem of delivering a single package with autonomous mobile agents with different velocities in polynomial-time: B\"artschi et al.~\cite{Bartschi2018} gave an $\mathcal{O}(k^2 m + k n^2 + \APSP)$ algorithm for this problem, where $\APSP$ stands for the time complexity of the All-Pairs Shortest Paths problem.  

Some work in the literature considered the minimization of both the total delivery time and the energy consumption. It was shown that the problem of delivering a single package with autonomous agents of different velocities and weights is solvable in polynomial-time when lexicographically minimizing the tuple ($\mathcal{E},\mathcal{T}$)~\cite{Bartschi2017b}. On the other hand, it is NP-hard to lexicographically minimize the tuple ($\mathcal{T},\mathcal{E}$) or a convex combination of both parameters~\cite{Bartschi2018}.

A closely related problem is the \texttt{Budgeted Delivery Problem} (\texttt{BDP})~\cite{Chalopin2014,Chalopin2013,Bartschi2017c}, in which a package needs to be delivered by a set of energy-constrained autonomous mobile agents. In \texttt{BDP}, the objective is to compute a route to deliver a single package while respecting the energy constraints of the autonomous mobile agents. This problem in weakly NP-hard in line graphs~\cite{Chalopin2014} and strongly NP-hard in general graphs~\cite{Chalopin2013}. A variant of this problem is the \texttt{Returning Budgeted Delivery Problem} (\texttt{RBDP})~\cite{Bartschi2017c}, which has the additional constraint that the energy-constrained autonomous agents must return to their original bases after carrying the package. Surprisingly, this new restriction makes \texttt{RBDP} solvable in polynomial time in trees. However, it is still strongly NP-hard even for planar graphs.

\subsection{Our Contribution} \label{subsec:contribution}

This paper deals with the \fastdel problem. We focus on the first objective, \ie, computing delivery schedules with the minimum delivery time. We provide an $\mathcal{O}(kn\log n+km)$ time algorithm for \fastdel, which is more efficient than the previous $\mathcal{O}(k^2 m + k n^2 + \APSP)$ time algorithm for this problem~\cite{Bartschi2018}. 

Preliminaries are presented in Sect.~\ref{sec:prelim}.
We then describe our algorithm to solve \fastdel in Sect.~\ref{sec:fdp}. The algorithm uses as a subroutine, called once for each edge of $G$, an algorithm for a problem that we refer to as \fastldel, which is presented in Sect.~\ref{sec:linedelivery}.

\section{Preliminaries}
\label{sec:prelim}%
As mentioned earlier, in the \fastdel problem
we are given an undirected graph $G=(V,E)$ with $n=|V|$ nodes
and $m=|E|$ edges. Each edge $e\in E$ has a positive length $l_e$.
We assume that a path can start on a node or in some point in the
interior of an edge. Analogously, it can end on another node or in 
some point in the interior of an edge. The length of a path is equal 
to the sum of the lengths of its edges. If a path starts or ends at 
a point in the interior of an edge, only the portion of its length 
that is traversed by the path is counted. For example, a path that is
entirely contained in an edge $e=\{u,v\}$ of length $l_e=10$ and
starts at distance $2$ from $u$ and ends at distance $5$ from $u$ has
length~$3$.

We are also given $k$ mobile agents, which are initially located at nodes
$p_1,p_2,\ldots,p_k \in V$. Each agent $i$ has a positive velocity (or speed)
$v_i$, $1\le i\le k$. A single package is located initially
(at time~$0$) on a given source node $s\in V$ and needs to be
delivered to a given target node $y\in V$. An agent can pick up
the package in one location and drop it off (or hand it to another
agent) in another one. An agent with velocity $v_i$ takes 
time~$d/v_i$ to carry a package over a path of length~$d$.
The objective of \fastdel is to determine a schedule for the agents
to deliver the package to node~$y$ as quickly as possible, \ie,
to minimize the time when the package reaches~$y$.

We assume that there is at most one agent on each node. 
This assumption can be justified by the fact that, if there were several agents 
on the same node, we would use only the fastest one among them. Therefore,
as already observed in \cite{Bartschi2018}, after a preprocessing step
running in time $\mathcal{O}(k+|V|)$, we may assume that $k\le n$.

The following lemma from \cite{Bartschi2018} establishes some useful
properties of an optimal delivery schedule for the mobile agents.

\begin{lemma}[B\"artschi et al., 2018]
\label{lem:prop}%
For every instance of \fastdel, there is an optimum solution in which
$(i)$ the velocities of the involved agents are strictly increasing, and
$(ii)$ no involved agent arrives on its pick-up location earlier than the
package (carried by the preceding agent).
\end{lemma}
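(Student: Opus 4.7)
The plan is to start from an arbitrary optimum schedule and apply two successive normalization arguments, each of which preserves the delivery time. First I would establish $(i)$ by an exchange argument that eliminates a redundant slow agent; then I would establish $(ii)$ by a simple delay-the-departure argument.

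For $(i)$, consider an optimum schedule and suppose that in the sequence of agents that actually carry the package there are two consecutive carriers $i$ and $i+1$ with $v_i \ge v_{i+1}$. Let $q$ be the handoff point between them and let $P$ be the path along which agent $i+1$ would carry the package (from $q$ to the next handoff point $q'$, or to $y$ if $i+1$ is the last carrier). Modify the schedule so that agent $i$ continues along $P$ carrying the package instead of handing off at $q$. Since $v_i \ge v_{i+1}$, agent $i$ reaches every point of $P$ no later than agent $i+1$ would have; in particular it reaches $q'$ at least as early. If there is a subsequent carrier $i+2$, its pickup at $q'$ is still feasible (at worst the package waits for it), and the overall delivery time does not increase. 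Agent $i+1$ is no longer involved, so the set of involved agents strictly shrinks. Iterating this elimination terminates with an optimum schedule whose involved agents have strictly increasing velocities.

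For $(ii)$, take an optimum schedule already satisfying $(i)$. Suppose some involved agent $i$ with $i \ge 2$ reaches its pickup location $q_i$ strictly before the preceding agent $i-1$ delivers the package to $q_i$. Then simply delay agent $i$'s departure from $p_i$ by the excess waiting time, along the same path at the same speed, so that it arrives at $q_i$ exactly when the package does. All subsequent trajectories and handoffs, as well as the delivery time, are unchanged. Applying this delay to every such agent yields a schedule that simultaneously satisfies $(i)$ and $(ii)$.

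The main point to verify carefully is the replacement step in $(i)$: one must check that swapping agent $i$ in for agent $i+1$ on the path $P$ does not invalidate any later handoff. This follows from the monotonicity observation that agent $i$ reaches each point of $P$ no later than agent $i+1$ would have, so the package only becomes available \emph{earlier} at every future handoff point; any resulting idle wait by a later agent is harmless and is then absorbed by the normalization in $(ii)$. Because neither transformation ever increases the delivery time, the final schedule is optimum and satisfies both properties claimed in the lemma.
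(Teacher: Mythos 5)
The paper does not actually prove this lemma: it is imported verbatim from B\"artschi et al.\ (2018), so there is no in-paper argument to compare against. Your two-stage normalization is the standard proof and matches in spirit the one in the cited reference: an exchange step that lets a faster carrier keep the package past a handover to a non-faster one, followed by delaying departures to enforce $(ii)$. The delay argument for $(ii)$ is correct as stated. The one place in $(i)$ that deserves more care is the replacement step: in an arbitrary optimum schedule an agent may a priori carry the package on several separate legs, so rerouting agent $i$ along $P$ (and deleting agent $i+1$'s leg) could conflict with a later carrying commitment of agent $i$ itself, or leave agent $i+1$ still involved elsewhere; your iteration still terminates because the number of carrying legs decreases, but feasibility of the intermediate schedules is not fully argued. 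The usual way to sidestep this is the one-shot version of your idea: restrict to the subsequence of \emph{record} carriers (each strictly faster than every carrier before it) and let each of them carry the package along the package's original trajectory up to the next record carrier's pickup point. Since each record carrier is at least as fast as every agent it replaces and the agents' approach trajectories are unchanged, every handover and the final delivery occur no later, and the resulting schedule uses each agent once with strictly increasing velocities. Apart from that technicality, your argument --- including the observation that any induced waiting is harmless and is absorbed by the normalization in $(ii)$ --- is sound.
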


\section{Algorithm for the Fast Delivery Problem}
\label{sec:fdp}%
B\"artschi et al.~\cite{Bartschi2018} present a dynamic programming
algorithm that computes an optimum solution for \fastdel in
time $\mathcal{O}(k^2 m + kn^2 + \APSP)\subseteq \mathcal{O}(k^2 n^2 + n^3)$,
where $\APSP$ denotes the time complexity of an algorithm for
solving the all-pairs shortest path problem.
In this paper we design an improved algorithm with running time
$\mathcal{O}(km + kn\log n) \subseteq \mathcal{O}(n^3)$ by showing
that the problem can be solved by adapting the approach of
Dijkstra's algorithm for edges with time-dependent transit
times~\cite{CH/66,DW/09}.

For any edge $\{u,v\}$, we denote by $a_t(u,v)$ the earliest
time for the package to arrive at~$v$ if the package is at node~$u$
at time $t$ and needs to be carried over the
edge~$\{u,v\}$. We refer to the problem of computing
$a_t(u,v)$, for a given value of $t$ that represents
the earliest time when the package can reach $u$, as \fastldel.
In Sect.~\ref{sec:linedelivery}, we will show
that \fastldel can be solved in $\mathcal{O}(k)$ time after a preprocessing
step that spends $\mathcal{O}(k\log k)$ time per node. Our preprocessing
calls \textsc{PreprocessReceiver}$(v)$ once for each node
$v\in V\setminus\{s\}$ at the start of the algorithm. Then,
it calls \textsc{PreprocessSender}$(u,t)$ once for each
node $u \in V$, where $t$ is the earliest time when the
package can reach~$u$. Both preprocessing steps run
in $\mathcal{O}(k\log k)$ time per node. Once both preprocessing
steps have been carried out, a call to
\textsc{FastLineDelivery}$(u,v,t)$ computes $a_t(u,v)$ in $\mathcal{O}(k)$ time.

\begin{algorithm}[!tbh]
\SetAlgoLined
\KwData{graph $G=(V,E)$ with positive edge lengths $l_e$ and source node $s\in V$, target node $y\in V$; $k$ agents with velocity $v_i$ and initial location $p_i$ for $1\le i\le k$}
\KwResult{earliest arrival time $\dist(y)$ for package at destination}
\Begin{compute $d(p_i,v)$ for $1\le i\le k$ and all $v\in V$\;
construct list $A(v)$ of agents in order of increasing arrival times and velocities for each $v\in V$\;
\textsc{PreprocessReceiver}$(v)$ for all $v\in V\setminus\{s\}$\;
$\dist(s)\leftarrow t_s$\tcc*{time when first agent reaches $s$}
$\dist(v)\leftarrow \infty$ for all $v\in V\setminus\{s\}$\;
$\final(v)\leftarrow \mathrm{false}$ for all $v\in V$\;
insert $s$ into priority queue $Q$ with priority $\dist(s)$\;
\While{$Q$ not empty}{
	$u\leftarrow$ node with minimum $\dist$ value in $Q$\;
	delete $u$ from $Q$\;
	$\final(u)\leftarrow \mathrm{true}$\;
	\If{$u=y$}{break\;}
	$t\leftarrow\dist(u)$\tcc*{time when package reaches $u$}
	\textsc{PreprocessSender}$(u,t)$\;
	\ForAll{neighbors $v$ of $u$ with $\final(v)=\mathrm{false}$}{
		$a_t(u,v) \leftarrow \textsc{FastLineDelivery}(u,v,t)$\;
		\If{$a_t(u,v)<\dist(v)$}{
			$\dist(v)\leftarrow a_t(u,v)$\;
			\eIf{$v\in Q$}{
				decrease priority of $v$ to $\dist(v)$\;
			      }{
			      	insert $v$ into $Q$ with priority $\dist(v)$\;
				}
		}
	}
}
\Return{$\dist(y)$}\;}
\caption{Algorithm for \fastdel}
\label{algo:dijkstra}
\end{algorithm}

Algorithm~\ref{algo:dijkstra} shows the pseudo-code for our 
solution for \fastdel. Initially, we run Dijkstra's algorithm to solve the single-source shortest paths problem for each node where an agent is located initially (line 2). This takes time $\mathcal{O}(k(n\log n + m))$ if we use the implementation of Dijkstra's algorithm with Fibonacci heaps as priority
queue~\cite{FT/87} and yields the distance $d(p_i,v)$
(with respect to edge lengths~$l_e$) between any node $p_i$ where
an agent is located and any node $v\in V$.
From this we compute, for every node~$v$, the earliest
time when each mobile agent can arrive at that node: The earliest
possible arrival time of agent $i$ at node $v$ is $a_i(v) = d(p_i,v)/v_i$.
Then, we create a list of the arrival times of the $k$ agents on each node (line 3).
For each node, we sort the list of the $k$ agents by ascending arrival time in
$\mathcal{O}(k\log k)$ time, or $\mathcal{O}(nk\log k)$ in total for all nodes. 
We then discard from the list of each node all agents that arrive at the same time 
or after an agent that is strictly faster. If several agents with the same
velocity arrive at the same time, we keep one of them
arbitrarily. Let $A(v)$ denote the resulting list for node~$v$. 
Those lists will be used in the solution of the
\fastldel problem described in Sect.~\ref{sec:linedelivery}.

For each node~$v$, we maintain a value
$\dist(v)$ that represents the current upper bound on the
earliest time when the package can reach~$v$ (lines 5 and 6).
The algorithm maintains a priority queue containing nodes
that have a finite $\dist$ value, with the $\dist$ value
as the priority (line 8). In each step, a node $u$ with minimum $\dist$
value is removed from the priority queue (lines 10 and 11), and the node
becomes \emph{final} (line 12). Nodes that are not final are called
\emph{non-final}. The $\dist$ value of a final node
will not change any more and represents the earliest time
when the package can reach the node (line 16). After $u$ has been
removed from the priority queue, we compute for each
non-final neighbor $v$ of $u$ the time $a_t(u,v)$, where
$t=\dist(u)$, by solving the \fastldel problem (line 19).
If $v$ is already in $Q$, we
compare $a_t(u,v)$ with $\dist(v)$ and, if $a_t(u,v)<\dist(v)$,
update $\dist(v)$ to $\dist(v)=a_t(u,v)$ and adjust the
priority of $v$ in $Q$ accordingly (line 23).
On the other hand, if $v$ is not yet in~$Q$, we set $\dist(v)=a_t(u,v)$
and insert $v$ into~$Q$ (line 25). 

Let $t_s$ be the earliest time when an agent
reaches~$s$ (or $0$, if an agent is located at $s$ initially).
Let $i'$ be that agent. As the package must stay at $s$ from
time $0$ to time~$t_s$, we can assume that $i'$ brings the
package to $s$ at time $t_s$.
Therefore, we initially set $\dist(s)=t_s$ and insert $s$ into the
priority queue~$Q$ with priority~$t_s$. The algorithm terminates when $y$
becomes final (line 14) and returns the value $\dist(y)$, \ie, the earliest time when the package can reach~$y$. The schedule that delivers the package to $y$ by time
$\dist(y)$ can be constructed in the standard way, by storing for each node~$v$
the predecessor node $u$ such that $\dist(v)=a_{\dist(u)}(u,v)$ and the schedule
of the solution to \textsc{FastLineDelivery}$(u,v,\dist(u))$.

\begin{theorem}
Algorithm~\ref{algo:dijkstra}
runs in $\mathcal{O}(kn\log n +km)$ time and
computes an optimal solution to the \fastdel problem.
\end{theorem}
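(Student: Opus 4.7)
The plan is to analyze running time and correctness separately. For the running time, I would sum the costs of each phase. Line~2 runs Dijkstra once from each of the $k$ agent positions, costing $\mathcal{O}(k(m+n\log n))$ with Fibonacci heaps. Lines~3 and~4 sort the $k$ arrival times at each of the $n$ nodes and invoke \textsc{PreprocessReceiver}, which costs $\mathcal{O}(k\log k)$ per node; together these contribute $\mathcal{O}(nk\log k)$. In the main \textbf{while}-loop, each node is extracted at most once, and each extraction triggers a call to \textsc{PreprocessSender} in $\mathcal{O}(k\log k)$ time plus, for every incident edge, an $\mathcal{O}(k)$-time \fastldel call; summed over the whole execution this yields $\mathcal{O}(nk\log k + mk)$. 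Finally, the priority-queue operations (at most $n$ extract-min and $\mathcal{O}(m)$ decrease-key calls) cost $\mathcal{O}(n\log n + m)$ with a Fibonacci heap. Adding these bounds gives the claimed $\mathcal{O}(nk\log(nk) + mk)$.

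For correctness, the key structural fact to prove first is that $a_t(u,v)$ is non-decreasing in $t$: if $t_1 \le t_2$, any schedule that transports the package along $\{u,v\}$ from $u$ at time $t_2$ to $v$ at time $a_{t_2}(u,v)$ can be imitated starting at time $t_1$ by simply parking the package at $u$ until time $t_2$ and then running the same schedule, so $a_{t_1}(u,v)\le a_{t_2}(u,v)$. This FIFO property is precisely what underlies Dijkstra's algorithm in the time-dependent setting~\cite{CH/66,DW/09}.

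With FIFO in hand, I would verify the Dijkstra invariant by induction on the order in which nodes become final: when node $u$ is removed from $Q$, $\dist(u)$ equals the optimum arrival time of the package at $u$. The base case is $u=s$ with $\dist(s)=t_s$, which is correct because no agent can reach $s$ before $t_s$ and therefore the package cannot be relayed away from $s$ any earlier. For the inductive step, suppose $u$ is extracted but some schedule delivers the package to $u$ by time $T^\ast<\dist(u)$. Walk along that schedule from $s$ and let $w$ be the first visited node that is not yet final, with predecessor $w'$ on the walk. Since $w'$ is final, inductively $\dist(w')$ equals the true optimum arrival at $w'$, hence at most the schedule's own arrival time at $w'$. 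When $w'$ was processed the algorithm set $\dist(w)\le a_{\dist(w')}(w',w)$, and by FIFO this is at most the schedule's arrival time at $w$, which is $\le T^\ast<\dist(u)$. Thus $\dist(w)<\dist(u)$, contradicting the selection of $u$ from the priority queue. The main obstacle, and the only place where the time-dependent nature of the problem actually enters the argument, is the FIFO step above; the remainder is mechanical, and the theorem follows because the algorithm halts and returns $\dist(y)$ as soon as $y$ becomes final.
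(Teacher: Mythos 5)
Your running-time accounting and your Dijkstra-style induction (via the FIFO property, which you justify cleanly by ``parking'' the package at $u$) match the paper's argument for why no schedule can deliver the package to a node earlier than its $\dist$ value at extraction time. But your proof only establishes that direction: it shows $\dist(u)$ is a \emph{lower bound} on the optimum when $u$ becomes final, and silently assumes the converse, namely that $\dist(u)$ is actually \emph{achievable} by a single feasible schedule. That is not automatic here, and it is the one place where this problem differs from a generic time-dependent shortest-path instance. The value $a_t(u,v)$ returned by \fastldel for each edge is computed in isolation, assuming every agent is free to fly straight to an endpoint of that edge at its earliest possible arrival time. When you concatenate the per-edge schedules along the path found by the algorithm, the same agent could in principle be required on two different edges at conflicting times (or be assumed to start from its home base when it has in fact already been deployed elsewhere), in which case the concatenation would not be a valid schedule and $\dist(y)$ would be an unattainable lower bound.

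The paper closes this gap explicitly using Lemma~\ref{lem:prop}: every agent that participates in carrying the package over $\{u,v\}$ in the solution to \textsc{FastLineDelivery}$(u,v,t)$ has velocity at least that of the agent $i_0$ that brought the package to $u$ (a slower agent would never be used, since $i_0$ is already available at $u$ at time $t$), and by property~$(i)$ of the lemma the agents used on earlier edges of the path all have strictly smaller velocities than $i_0$ --- so no agent other than $i_0$ itself is reused, and $i_0$ is genuinely present at $u$ at time $t$. Without some argument of this kind, your induction proves only $\dist(y)\le\mathcal{T}^{*}$, not equality, so you should add this feasibility step before concluding optimality.
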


\begin{proof}
First, we note that it is easy to see that $a_t(u,v)\le a_{t'}(u,v)$
holds for $t'\ge t$ in our setting: If the package arrives at $u$ at
time $t$ and if we had $a_{t'}(u,v)<a_t(u,v)$ for some $t'>t$,
the package could simply wait at $u$ until time $t'$ and then
get transported to $v$ in the same way as if it had
reached $u$ at time $t'$. The package would reach $v$ at
time $a_{t'}(u,v)$, contradicting the assumption that
$a_{t'}(u,v)<a_t(u,v)$.
Thus, the network has the FIFO property (or non-overtaking
property), and it is known that the modified Dijkstra
algorithm is correct for such networks~\cite{DW/09}.

Furthermore, we can observe that concatenating the
solutions of \fastldel (which are computed by Algorithm~\ref{algo:linedel}
in Sect.~\ref{sec:linedelivery} and which are correct
by Theorem~\ref{th:linedel} in Sect.~\ref{sec:linedelivery})
over the edges of the
shortest path from $s$ to $y$ determined by
Algorithm~\ref{algo:dijkstra} indeed gives a feasible
solution to \fastdel: Assume that the package reaches
$u$ at time $t$ while being carried by agent $i$ and is
then transported from $u$ to $v$ over edge $\{u,v\}$,
reaching $v$ at time $a_t(u,v)$. The only agents
involved in transporting the package from $u$ to
$v$ in the solution returned by \textsc{FastLineDelivery}$(u,v,t)$
will have velocity at least $v_i$ because agent $i$
arrives at $u$ before time $t$, \ie, $a_i(u)\le t$, and
hence no slower agent would be used to transport
the package from $u$ to~$v$. These agents
have not been involved in transporting the package
from $s$ to $u$ by property~(i) of Lemma~\ref{lem:prop},
except for agent~$i$ who is indeed
available at node $u$ from time~$t$.

The running time of the algorithm consists of the following
components: Computing standard shortest paths with respect
to the edge lengths $l_e$ from the locations
of the agents to all other nodes takes $\mathcal{O}(k(n\log n+m))$ time.
The time complexity of the Dijkstra algorithm with
time-dependent transit times for a graph with $n$ nodes
and $m$ edges is $\mathcal{O}(n\log n+m)$. The only extra work
performed by our algorithm consists of $\mathcal{O}(k\log k)$ pre-processing
time for each node and $\mathcal{O}(k)$ time per edge for solving the
\fastldel problem, a total of $\mathcal{O}(nk\log k+mk)=
\mathcal{O}(kn\log n+km)$ time.
\end{proof}

\section{An Algorithm for Fast Line Delivery}
\label{sec:linedelivery}%
In this section we present the solution to
\fastldel that was used as a subroutine in the
previous section. 
We consider the setting of a single edge $e=\{u,v\}$ with
end nodes $u$ and $v$. The objective is to deliver the package
from node $u$ to node~$v$ over edge~$e$ as quickly as possible.
In our illustrations, we use the
convention that $v$ is drawn on the left and $u$ is drawn
on the right. We assume that the package reaches $u$ at time $t$
(where $t$ is the earliest possible time when the package can reach~$u$)
while being carried by agent~$i_0$.

As discussed in the previous section, let $A(v)=(a_1,a_2,\ldots,a_{\ell})$
be the list of agents arriving at node~$v$ in order of
increasing velocities and increasing arrival times.
For $1\le i\le \ell$, denote by $t_i$ the time when $a_i$
reaches $v$, and by $v_i$ the velocity of agent~$a_i$.
We have $t_i<t_{i+1}$ and $v_i<v_{i+1}$ for $1\le i<\ell$.

Let $B(u)=(b_1,b_2,\ldots,b_r)$ be the list of agents
with increasing velocities and increasing arrival times
arriving at node $u$, starting
with the agent $i_0$ whose arrival time is set to~$t$.
The list $B(u)$ can be computed from $A(u)$ in $\mathcal{O}(k)$ time
by discarding all agents slower than $i_0$ and setting
the arrival time of $i_0$ to~$t$. For $1\le i\le r$, let
$t_i'$ denote the time when $b_i$ reaches~$w$, and
let $v_i'$ denote the velocity of~$b_i$.
We have $t_i'<t_{i+1}'$ and $v_i'<v_{i+1}'$ for $1\le i<r$.

As $k$ is the total number of agents, we have
$\ell\le k$ and $r\le k$. In the following, we first
introduce a geometric representation of the agents and
their potential movements in transporting the package
from $u$ to $v$ (Sect.~\ref{sec:geom}) and then
present the algorithm for \fastldel (Sect.~\ref{sec:linealgo}).

\subsection{Geometric Representation and Preprocessing}
Figure~\ref{fig:geom} shows a geometric representation 
of how agents $a_1,\ldots,a_\ell$ move towards $u$
if they start to move from $v$ to $u$ immediately
after they arrive at~$v$. 
The vertical axis represents time, and the horizontal
axis represents the distance from $v$ (in the direction
towards~$u$ or, more generally, any neighbor of $v$).
The movement of each
agent $a_i$ can be represented by a line
with the line equation $y=t_i+{x}/{v_i}$ (\ie, the $y$ value
is the time when agent $a_i$ reaches the point at
distance $x$ from $v$).
After an agent is overtaken by a faster agent, the
slower agent is no longer useful for picking up the
package and returning it to $v$, so we can discard
the part of the line of the slower agent that lies to the
right of such an intersection point with the line of
a faster agent. After doing this for all agents (only
the fastest agent $a_\ell$ does not get overtaken and will
not have part of its line discarded), we obtain a
representation that we call the \emph{relevant arrangement} $\Psi$ of the
agents $a_1,\ldots,a_\ell$. In the relevant arrangement,
each agent $a_i$ is represented by a line segment
that starts at $(0,t_i)$, lies on the
line $y=t_i+{x}/{v_i}$, and ends at the first intersection
point between the line for $a_i$ and the line of a
faster agent $a_j$, $j>i$. For the fastest agent $a_\ell$,
there is no faster agent, and so the agent is represented
by a half-line. One can view the relevant arrangement
as representing the set of all points where an agent from
$A(v)$ travelling towards~$u$ could receive the package from a slower
agent travelling towards~$v$.

\begin{figure}
\centering
{\input{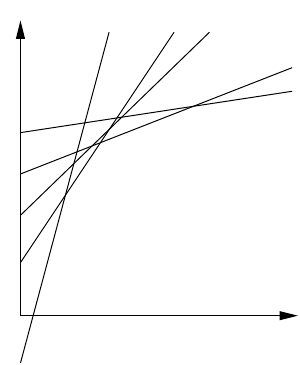_t}}
\hspace{2cm}
{\input{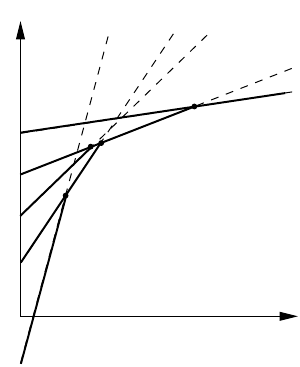_t}}

\caption{Geometric representation of agents moving
from $v$ towards $u$ (left), and their relevant arrangement
with removed half-lines shown dashed (right)}
\label{fig:geom}
\end{figure}

The relevant arrangement has size $\mathcal{O}(k)$ because each intersection point
can be charged to the slower of the two agents that create the intersection.
It can be computed in $\mathcal{O}(k\log k)$
time using a sweep-line algorithm very similar to
the algorithm by Bentley and Ottmann~\cite{BO/79} for line segment
intersection. The relevant
arrangement is created by a call to \textsc{PreprocessReceiver}$(v)$ (see Algorithm~\ref{algo:preprocreceiver}).

\begin{algorithm}[htbp]
\SetAlgoLined
\KwData{Node $v$ (and list $A(v)$ of agents arriving at $v$)}
\KwResult{Relevant arrangement $\Psi$}
Create a line $y=t_i+{x}/{v_i}$ for each agent $a_i$ in $A(v)$\;
Use a sweep-line algorithm (starting at $x=0$, moving towards larger $x$ values) to construct the relevant arrangement $\Psi$\;
\caption{Algorithm \textsc{PreprocessReceiver}$(v)$}
\label{algo:preprocreceiver}
\end{algorithm}

\begin{algorithm}[htbp]
\SetAlgoLined
\KwData{Node $u$ (and list $A(u)$ of agents arriving at $u$), time $t$ when package arrives at~$u$ (carried by agent $i_0$)}
\KwResult{Lower envelope $L$ of agents carrying package away from $u$}
$B(u)\leftarrow A(u)$ with agents slower than $i_0$ removed and arrival time of~$i_0$ set to $t$\;
Create a line $y=t_i'-{x}/{v_i'}$ for each agent $b_i$ in $B(v)$\;
Use a sweep-line algorithm (starting at $x=0$, moving towards smaller $x$ values) to construct the lower envelope $L$;
\caption{Algorithm \textsc{PreprocessSender}$(u,t)$}
\label{algo:preprocsender}
\end{algorithm}

For the agents in the list $B(u)=(b_1,\ldots,b_r)$ that move
from $u$ towards $v$, we use a similar representation. However, in this case we only need to determine the lower envelope of the lines representing the
agents. See Fig.~\ref{fig:envelope} for an example.
The lower envelope $L$ can be computed in $\mathcal{O}(k\log k)$ time
(e.g., using a sweep-line algorithm, or via computing the
convex hull of the points that are dual to the lines~\cite[Sect.~11.4]{MOMM/08}).
The call \textsc{PreprocessSender}$(u,t)$ (see Algorithm~\ref{algo:preprocsender}) determines the
list $B(u)$ from $A(u)$ and $t$ in $\mathcal{O}(k)$ time and
then computes the lower envelope of the agents in $B(u)$
in time $\mathcal{O}(k\log k)$.
When we consider a particular edge $e=\{u,v\}$, we place
the lower envelope $L$ in such a way that the position on the $x$-axis
that represents $u$ is at $x=l_e$. We say in this
case that the lower envelope is \emph{anchored} at $x=l_e$.
Algorithm~\ref{algo:preprocsender} creates the lower envelope anchored at $x=0$,
and the lower envelope anchored at $x=l_e$ can be obtained by shifting it right
by~$l_e$.
\label{sec:geom}%
\begin{figure}
\centerline{
{\input{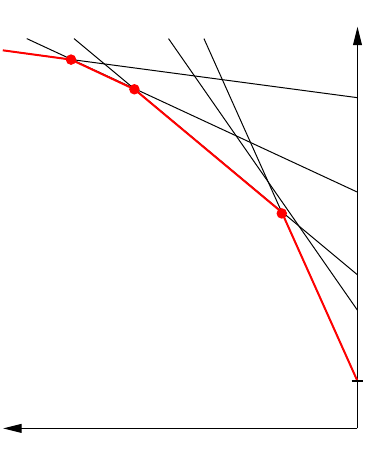_t}}
}
\caption{Geometric representation of agents moving
from $u$ towards $v$ (lower envelope marked in red)}
\label{fig:envelope}
\end{figure}

\subsection{Main Algorithm}
\label{sec:linealgo}%
Assume we have computed the relevant arrangement $\Psi$
of the agents in the list $A(v)=(a_1,\ldots,a_\ell)$
and the lower envelope
$L$ of the lines representing the agents in the list
$B(u)=(b_1,b_2, \ldots,b_r)$.

The lower envelope $L$ of the agents in $B(u)$ represents
the fastest way for the package to be transported
from $u$ to $v$ if only agents in $B(u)$ contribute
to the transport and these agents move from $u$ towards
$v$ as quickly as possible. At each time point during
the transport, the package is at the closest point
to $v$ that it can reach if only agents in $B(u)$
travelling from $u$ to $v$ contribute to its transport.
We say that such a schedule where the package is as close
to $v$ as possible
at all times is \emph{fastest
and foremost} (with respect to a given set of agents).

The agents in $A(v)$ can potentially speed up the
delivery of the package to $v$ by travelling towards~$u$,
picking up the package from a slower agent that is currently
carrying it, and then turning around and
moving back towards $v$ as quickly as possible.
By considering intersections between $L$ and the
relevant arrangement $\Psi$ of $A(v)$, we can find
all such potential handover points. More precisely, we
trace $L$ from $u$ (\ie, $x=d(u,v)$) towards $v$ (\ie, $x=0$).
Assume that $q$ is the first point where a handover is possible.
If a faster agent $j$ from $A(v)$ can receive the package
from a slower agent $i$ at point $q$ of~$L$, we update
$L$ by computing the lower envelope of $L$ and
the half-line representing the agent $j$ travelling
from point~$q$ towards~$v$.
If the intersection point is with an agent $j$ from $A(v)$
that is not faster than the agent $i$ that is currently
carrying the package, we ignore the intersection point.
We then continue to trace $L$ towards $v$ and process
the next intersection point in the same way.
We repeat this step
until we reach~$v$ (\ie, $x=0$). The final $L$ represents
an optimum solution to the \fastldel problem,
and the $y$-value of $L$ at $x=0$ represents
the arrival time of the package at~$v$.
See Algorithm~\ref{algo:linedel} for
pseudo-code of the resulting algorithm.

\begin{algorithm}[htbp]
\SetAlgoLined
\KwData{Edge $e=\{u,v\}$, earliest arrival time $t$ of package at $u$,
  lists $A(u)$ and~$A(v)$}
\KwResult{Earliest time when package reaches $v$ over edge $\{u,v\}$}
\tcc{Assume \textsc{PreprocessReceiver}$(v)$ and \textsc{PreprocessSender}$(u,t)$
have already been called.}
$L\leftarrow$ lower envelope of agents $B(u)$ anchored at $x=l_e$\;
$\Psi\leftarrow$ relevant arrangement of $A(v)$\;
start tracing $L$ from $u$ (\ie, $x=l_e$) towards $v$ (\ie, $x=0$)\;
\While{$v$ (\ie, $x=0$) is not yet reached}{
  $q \leftarrow$ next intersection point of $L$ and $\Psi$\;
  \tcc{assume $q$ is intersection of agent $i$ from $L$ and agent $j$ from $\Psi$}
  \eIf{$v_j>v_i$}{
	replace $L$ by the lower envelope of $L$ and the line for agent $j$
	moving left from point $q$\;\label{mergelower}
  }{
  ignore $q$
  }
}
\Return{$y$-value of $L$ at $x=0$}
\caption{Algorithm \textsc{FastLineDelivery}$(u,v,t)$}
\label{algo:linedel}
\end{algorithm}

An illustration of step \ref{mergelower} of Algorithm~\ref{algo:linedel},
which updates $L$ by incorporating a faster agent from $A(v)$,
is shown in Fig.~\ref{fig:modL}. Note that the time for executing
this step is $\mathcal{O}(g)$, where $g$ is the number of segments removed from
$L$ in the operation. As a line segment corresponding to an agent
can only be removed once, the total time spent in executing
step \ref{mergelower} (over all executions of step \ref{mergelower}
while running Algorithm~\ref{algo:linedel}) is $\mathcal{O}(k)$.
\begin{figure}
\centerline{
\scalebox{0.9}{\input{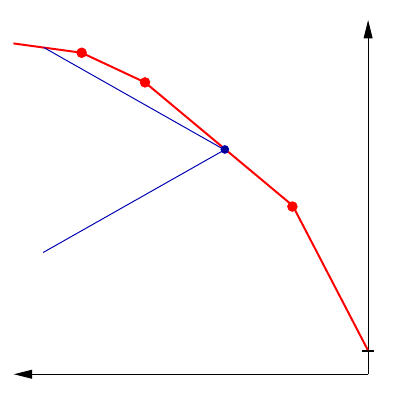_t}}
\hspace{2em}
\scalebox{0.9}{\input{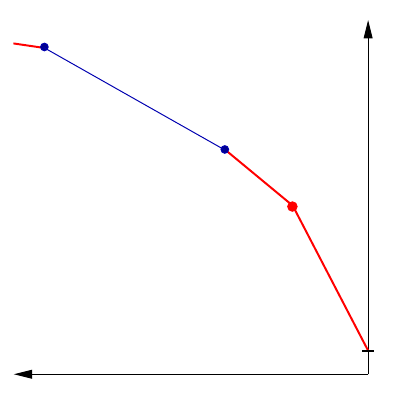_t}}
}
\caption{Agent $i$ meets a faster agent $j$ at intersection
point $q$ (left). The part of $L$ from $q$ to $q'$ has been
replaced by a line segment representing agent~$j$ carrying the
package towards $v$ (right).}
\label{fig:modL}
\end{figure}

Finally, we need to analyze how much time is spent in finding
intersection points with line segments of the relevant arrangement $\Psi$
while following the lower envelope $L$ from $u$ to $v$.
See Fig.~\ref{fig:intersections} for an illustration.
We store the relevant arrangement using standard data structures
for planar arrangements~\cite{GR/00}, so that we can follow the edges of each
face in clockwise or counter-clockwise direction efficiently (\ie,
we can go from one edge to the next in constant time) and move
from an edge of a face to the instance of the same edge in the
adjacent face in constant time. This representation also allows
us to to trace the lower envelope of $\Psi$ in time $\mathcal{O}(k)$.

First, we remove from $\Psi$ all line segments corresponding
to agents that are not faster than $i_0$ (recall that $i_0$
is the agent that brings the package to node $u$ at time $t$).
Then, in order to find the first intersection point $q_1$ between
$L$ and $\Psi$, we can trace $L$ and the lower envelope of $\Psi$ from $u$ 
towards $v$ in parallel until they meet. One may observe that $L$ cannot 
be above the lower envelope of $\Psi$ at $u$ because otherwise an agent 
faster than $i_0$ reaches $u$ before time $t$, and that agent could pick 
up the package from $i_0$ before time $t$ and deliver it to $u$ before 
time $t$, a contradiction to $t$ being the earliest arrival time 
for the package at $u$. This takes $\mathcal{O}(k)$ time.
After computing one intersection point $q_i$ (and possibly updated
$L$ as shown in Fig.~\ref{fig:modL}), we find the next intersection
point by following the edges on the inside of the next face in
counter-clockwise direction until we hit $L$ again at $q_{i+1}$.
This process is illustrated by the blue arrow in Fig.~\ref{fig:intersections}, 
showing how $q_2$ is found starting from $q_1$. Hence, the total time
spent in finding intersection points is bounded by the initial size of $L$
and the number of edges of all the faces of the relevant arrangement, which is
$\mathcal{O}(k)$.

\begin{figure}[!htb]
\centerline{
{\input{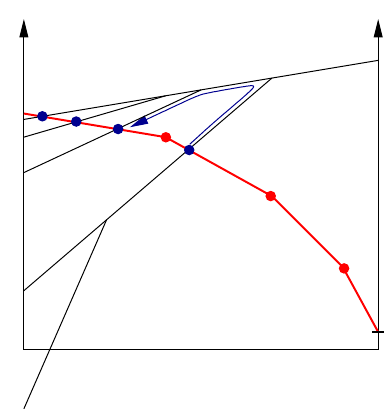_t}}
}
\caption{Intersection points $q_1$, $q_2$, $q_3$, $q_4$ between
the lower envelope $L$ (red) and the relevant arrangement $\Psi$.
Point $q_2$ is found from $q_1$ by simultaneously tracing $L$ and
the edges of the face $f$ of $\Psi$ in counter-clockwise direction.}
\label{fig:intersections}
\end{figure}

\begin{theorem}
\label{th:linedel}%
Algorithm~\ref{algo:linedel} solves \textsc{FastLineDelivery}$(u,v,t)$
and runs in $\mathcal{O}(k)$ time, provided that
\textsc{PreprocessReceiver}$(v)$ and
\textsc{PreprocessSender}$(u,t)$, which take
time $\mathcal{O}(k\log k)$ each, have already been executed.
\end{theorem}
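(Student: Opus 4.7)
The plan is to split the argument into a correctness part and a running-time part. For correctness, I would maintain the invariant that, throughout Algorithm~\ref{algo:linedel}, the lower envelope $L$ (restricted to the portion already traced, together with the untraced remainder) represents a fastest-and-foremost schedule for the package when the set of permitted handovers to agents in $A(v)$ is exactly those processed so far. After all intersection points have been considered, $L$ then represents a fastest-and-foremost schedule among all schedules using agents from $B(u)$ and $A(v)$; in particular, its $y$-value at $x=0$ equals the earliest time the package can reach~$v$.

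For the correctness argument itself, I would proceed as follows. By Lemma~\ref{lem:prop}(i), it suffices to consider schedules in which the carriers have strictly increasing velocities starting from $i_0$, so every agent other than $i_0$ that is used is either in $B(u)$ (moving from $u$ toward $v$) or in $A(v)$ (moving from $v$ toward $u$, then turning around after taking the package). For a schedule that uses only agents in $B(u)$, the fastest-and-foremost trajectory is exactly the lower envelope $L$ of the lines $y = t_i' + (l_e - x)/v_i'$, which is what the preprocessing computes. The only way to improve on $L$ is for some agent $j \in A(v)$, faster than the agent $i$ currently carrying the package, to meet the package at a point; in the geometric picture this meeting occurs precisely where the line segment for $j$ in $\Psi$ intersects $L$. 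I would then argue, by induction on the number of intersection points processed in the order along $L$ from $u$ to $v$, that the update in line~\ref{mergelower} preserves the invariant: replacing $L$ past the handover point $q$ by the lower envelope of $L$ and the line for $j$ traveling from $q$ toward $v$ is exactly the fastest-and-foremost continuation after the handover. Intersections in which the candidate agent $j$ is not strictly faster than the current carrier can be safely ignored by Lemma~\ref{lem:prop}(i). Finally, I would note that no intersection point is missed, because the FIFO-like ordering of candidate handovers along $L$ coincides with processing in the order in which $L$ is traced.

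For the running time, I would simply assemble the per-step bounds that have already been established in the surrounding discussion. The preprocessing steps \textsc{PreprocessReceiver}$(v)$ and \textsc{PreprocessSender}$(u,t)$ take $\mathcal{O}(k\log k)$ each, by the sweep-line / convex-hull construction. Forming $B(u)$ from $A(u)$ costs $\mathcal{O}(k)$. The initial simultaneous trace of $L$ and the lower envelope of $\Psi$ from $u$ until the first intersection $q_1$ visits $\mathcal{O}(k)$ segments. Each subsequent intersection is located by walking counter-clockwise around one face of the planar arrangement representing $\Psi$, and since every edge of $\Psi$ is charged at most a constant number of times across the entire trace, the total cost of all intersection searches is $\mathcal{O}(k)$. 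Each execution of line~\ref{mergelower} is amortized $\mathcal{O}(1)$ per segment of $L$ that it removes, and a segment once removed is never re-inserted, so the cumulative cost of the updates to $L$ is also $\mathcal{O}(k)$. Summing these gives the claimed $\mathcal{O}(k)$ bound after preprocessing.

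The main obstacle is the correctness argument, specifically the inductive claim that processing intersection points in the order they are encountered along $L$ suffices to discover an optimal schedule. The subtle point is that when we incorporate a faster agent $j$ at a handover point $q$, we change $L$ to the left of $q$, and we must be sure that this update does not invalidate later intersection points or cause us to miss a better handover that would only become relevant after the update. I would address this by observing that any handover usable in an optimal schedule occurs with an agent whose line lies below the original $L$ at the handover point, so the relevant intersections with $\Psi$ are exactly those along the currently maintained $L$; combined with the FIFO property inherited from increasing velocities, this justifies the greedy left-to-right sweep and closes the argument.
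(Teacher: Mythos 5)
Your proposal is correct and follows essentially the same route as the paper: the paper's proof also consists of stating the invariant that the traced portion of $L$ is the fastest-and-foremost schedule over $B(u)$ together with the $A(v)$-agents whose handovers have been processed so far, and of assembling the $\mathcal{O}(k)$ running time from the amortized bounds on envelope updates and face traversals established in the preceding discussion. Your treatment of the subtle point---that lowering $L$ to the left of a handover point $q$ only moves subsequent intersections with $\Psi$ further toward $v$, so no handover is skipped by the greedy sweep---is in fact more explicit than the paper's, which asserts the invariant without elaboration.
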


\begin{proof}
The claimed running time follows from the discussion above.
Correctness follows by observing that the following invariant holds:
If the algorithm has traced $L$ up to position $(x_0,y_0)$,
then the current $L$ represents the fastest and foremost solution for
transporting the package from $u$ to $v$ using only agents in $B(u)$
and agents from $A(v)$ that can reach the package by
time~$y_0$.
\end{proof}

\section*{Acknowledgments}
Iago A. Carvalho was financed in part by the Coordenação de Aperfeiçoamento de Pessoal de Nível Superior - Brasil (CAPES) - Finance Code 001.

\bibliographystyle{plainurl}
\bibliography{fastdelivery}

\end{document}